\documentclass[9pt,conference]{IEEEtran}
\pdfoutput=1
\IEEEoverridecommandlockouts
\usepackage{cite}
\usepackage{amsmath,amssymb,amsfonts,amsthm}
\usepackage{graphicx}
\usepackage{subfigure}
\usepackage{textcomp}
\usepackage{xcolor}
\usepackage{algorithm}
\usepackage{algpseudocode}
\usepackage{mathrsfs}
\usepackage{setspace}
\usepackage{subfig}
\usepackage{epsfig}  
\usepackage{multirow} 
\usepackage{calc}
\usepackage{tikz}
\usetikzlibrary{patterns}
\usetikzlibrary{arrows}
\usepackage{sansmath}
\tikzset{
     task/.style={fill=#1,  rectangle},
     task2b/.style={task=orange,draw,minimum height=\uy},
     task7/.style={draw,minimum height=\uy,},
     task9/.style={task=lightgray,draw,minimum height=\uy,},
}
\tikzset{>=latex}
\tikzstyle{circleNode}=[circle,draw=blue!75,fill=blue!20,minimum
size=6mm]
\tikzstyle{niceFill}=[draw=blue!75,fill=blue!20,minimum size=6mm]
\def\uy{0.5cm} 
\usepackage{color}

\newtheorem{theorem}{Theorem}
\newtheorem{lemma}{Lemma}


\def\BibTeX{{\rm B\kern-.05em{\sc i\kern-.025em b}\kern-.08em
    T\kern-.1667em\lower.7ex\hbox{E}\kern-.125emX}}
\begin{document}

\title{DPCP-p: A Distributed Locking Protocol for Parallel Real-Time Tasks
\thanks{Work supported by the NSFC (Grant No. 61802052) and the China Postdoctoral Science Fundation Funded Project (Grant No. 2017M612947).}
}

\author{
Maolin Yang\textsuperscript{$\dagger$} \quad\quad Zewei Chen\textsuperscript{$\dagger$} \quad\quad  Xu Jiang\textsuperscript{$\dagger$} \quad\quad Nan Guan\textsuperscript{$\ddagger$} \quad\quad Hang Lei\textsuperscript{$\dagger$}
\\
\textsuperscript{$\dagger$}\emph{University of Electronic Science and Technology of China (UESTC), Chengdu, China} \\
\textsuperscript{$\ddagger$}\emph{Hong Kong Polytechnic University (PolyU), Hong Kong, China}
}

\maketitle
\thispagestyle{plain}
\pagestyle{plain}

\begin{abstract}
Real-time scheduling and locking protocols are fundamental facilities to construct time-critical systems. For parallel real-time tasks, predictable locking protocols are required when concurrent sub-jobs mutually exclusive access to shared resources. This paper for the first time studies the distributed synchronization framework of parallel real-time tasks, where both tasks and global resources are partitioned to designated processors, and requests to each global resource are conducted on the processor on which the resource is partitioned. We extend the Distributed Priority Ceiling Protocol (DPCP) for parallel tasks under federated scheduling, with which we proved that a request can be blocked by at most one lower-priority request. We develop task and resource partitioning heuristics and propose analysis techniques to safely bound the task response times. Numerical evaluation (with heavy tasks on 8-, 16-, and 32-core processors) indicates that the proposed methods improve the schedulability significantly compared to the state-of-the-art locking protocols under federated scheduling.
\end{abstract}

\begin{IEEEkeywords}
real-time scheduling, locking protocols, parallel tasks
\end{IEEEkeywords}

\section{Introduction}
To exploit the parallelism for time-critical applications on multicores, the design of scheduling and analysis techniques for parallel real-time tasks has attracted increasing interests in recent years. Among the scheduling algorithms for parallel real-time tasks, the federated scheduling~\cite{DBLP:conf/ecrts/LiCALGS14} is a promising approach with high flexibility and simplicity in analysis.   

Coordinated locking protocols are used to ensure mutually exclusive access to shared resources while preventing uncontrolled priority inversions~\cite{DBLP:journals/tpds/DinhLAGL18,DBLP:conf/dac/JiangGLY19}. In multiprocessor systems, requests to shared resources can be executed locally by the tasks~\cite{DBLP:conf/icdcs/Rajkumar90} or remotely by resource agents~\cite{DBLP:conf/rtss/RajkumarSL88}, e.g., by means of the Remote Procedure Call (RPC) mechanism. Local execution of requests is in general more efficient since migrations are not needed, while blockings can be better explored and managed with remote execution of requests, e.g., by constraining resource contentions on designated processors~\cite{DBLP:conf/emsoft/HsiuLK11,DBLP:conf/rtss/HuangYC16}. While existing locking protocols for parallel tasks~\cite{DBLP:journals/tpds/DinhLAGL18,DBLP:conf/dac/JiangGLY19} are all based on local execution of requests, no work has been done with remote execution of requests so far as we know.

The Distributed Priority Ceiling Protocol (DPCP)~\cite{DBLP:conf/rtss/RajkumarSL88} is a classic multiprocessor real-time locking protocol for sequential tasks that executes requests of global resources remotely, where both tasks and shared resources are partitioned among the processors and all requests to a global resource must be conducted by the resource agents on the processor on which the resource is partitioned. Empirical studies~\cite{DBLP:conf/rtas/Brandenburg13} indicate that the DPCP has better schedulability performance compared to similar protocols with local execution of requests. Further, the recent Resource-Oriented Partitioned (ROP) scheduling~\cite{DBLP:conf/rtss/HuangYC16,DBLP:conf/rtns/BruggenCHY17,DBLP:journals/tc/YangHC19} with the DPCP guarantees bounded speedup factors.

In addition, since each heavy task exclusively uses a subset of processors under federated scheduling, there could be significant resource waste under the federated scheduling, i.e., almost half of the processing capacity is wasted in the extreme case. Executing global-resource-requests on remote processors can alleviate the potential resource wastes by shifting a part of the resource-related workload of a task to processors with lower workload. 

This paper for the first time studies the distributed synchronization framework for parallel real-time tasks. we answer the fundamental question of whether the key insight of remote execution of shared resources for sequential tasks can be applied to parallel real-time tasks and how to do so. We propose \emph{DPCP-p}, an extension of DPCP, to support parallel real-time tasks under federated scheduling, and develop the corresponding schedulability analysis and partitioning heuristic. DPCP-p retains the fundamental property of the underlying priority ceiling mechanism of the DPCP, namely a request can be blocked by at most one lower-priority request. Numerical evaluation with heavy tasks on more than 8-core processors indicates that DPCP-p improves the schedulability performance significantly compared to existing locking protocols under federated scheduling.

\section{System Model and Terminologies}
\label{sec:model}
We consider a set of $n$ parallel tasks $\tau=\{\tau_1,...,\tau_n\}$ to be scheduled on $m\geq 2$ identical processors $\wp=\{\wp_1,...,\wp_m\}$ with $n_r$ shared resources $\Phi=\{\ell_1,...,\ell_{n_r}\}$. 

\noindent \textbf{Parallel Tasks.}
Each task $\tau_i$ is characterized by a Worst-Case Execution Time (WCET) $C_i$, a relative deadline $D_i$, and a minimum inter-arrival time $T_i$, where $D_i\leq T_i$ (\emph{constrained-deadline} is considered). The utilization of $\tau_i$ is defined by $U_i=C_i/T_i$. 

The structure of $\tau_i$ is represented by a \emph{Directed Acyclic Graph (DAG)} $G_i=\langle V_i,E_i \rangle$, where $V_i$ is the set of vertices and $E_i$ is the set of edges. Each vertex $v_{i,x}\in  V_i$ has a WCET $C_{i,x}$, and the WCET of all vertices of $\tau_i$ is $C_i=\sum_{v_{i,x}\in V_i}C_{i,x}$. Each edge $(v_{i,x},v_{i,y})\in E_i$ represents the precedence relation between $v_{i,x}$ and $v_{i,y}$. A vertex $v_{i,x}$ is said to be \emph{pending} during the time while all its predecessors are finished and $v_{i,x}$ is not finished. 
A complete path is a sequence of vertices $(v_{i,a},...,v_{i,z})$, where $v_{i,a}$ is a head vertex, $v_{i,z}$ is a tail vertex, and $v_{i,x}$ is the predecessor of $v_{i,y}$ for each pair of consecutive vertices $v_{i,x}$ and $v_{i,x+1}$. We use $\lambda_i$ to denote an arbitrary complete path. The length of $\lambda_i$, denoted by $\mathcal{L}(\lambda_i)$, is defined as the sum of the WCETs of the vertices on $\lambda_i$. We also use $\mathcal{L}_i^{\ast}$ to denote the length of the longest path of $G_i$. For example in Fig.~1(a), the longest path of $G_i$ is $(v_{i,1},v_{i,5},v_{i,7},v_{i,8})$, and $\mathcal{L}_i^{\ast}=10$  

At runtime, each task generates a sequence of jobs, and each job inherits the DAG structure of the task. Let $J_{i,j}$ denote the $j$th job of $\tau_i$. Let $a_{i,j}$ and $f_{i,j}$ denote the arrival and finish time of $J_{i,j}$ respectively, then $J_{i,j}$ must finish no later than $a_{i,j}+D_i$, and the subsequent job $J_{i,j+1}$ cannot arrive before $a_{i,j}+T_i$. The Worst-Case Response Time (WCRT) of task $\tau_i$ is defined as $R_i=\max_{\forall j}\{f_{i,j}-a_{i,j}\}$. For brevity, let $J_i$ be an arbitrary job of $\tau_i$.

\noindent \textbf{Shared Resources.}
Each task $\tau_i$ uses a set of shared resources $\Phi_i\subseteq\Phi$, and each resource $\ell_q$ is shared by a set of tasks $\tau(\ell_q)$. To ensure mutual exclusion, $\ell_q$ is protected by a \emph{binary semaphore} (also called a \emph{lock} for short). A job is allowed to execute a \emph{critical section} for $\ell_q$ only if it holds the lock of $\ell_q$, otherwise, it is suspended. A vertex $v_{i,x}$ requests $\ell_q$ at most $N_{i,x,q}$ times, and each time uses $\ell_q$ for a time of at most $L_{i,q}$. For simplicity, we assume that a path $\lambda_i$ requests $\ell_q$ at most $N^{\lambda}_{i,q}=\sum_{v_{i,x}\in\lambda_i}N_{i,x,q}$ times, and a job $J_i$ requests $\ell_q$ at most $N_{i,q}=\sum_{v_{i,x}\in V_i}N_{i,x,q}$ times.

Given that $L_{i,q}$ is included in $C_i$, for brevity, we use $C^{\prime}_{i,x}$ and $C_i^{\prime}$ to denote the WCETs of the non-critical sections of $v_{i,x}$ and $\tau_i$, respectively. For simplicity, it is assumed that $C_i^{\prime}=\sum_{v_{i,x}\in\tau_i}C_{i,x}^{\prime}=C_i-\sum_{\ell_q\in \Phi_i}N_{i,q}\cdot L_{i,q}$. Further, critical sections are assumed to be non-nested, and nested critical sections remain in future work. 

\noindent \textbf{Scheduling.}
The tasks are scheduled based on the \emph{federated scheduling} paradigm~\cite{DBLP:conf/ecrts/LiCALGS14}. Each task $\tau_i$ with $C_i/D_i>1$ (i.e., \emph{heavy tasks}) is assigned $m_i$ dedicated processors, and we use $\wp(\tau_i)$ to denote the set of processors assigned to $\tau_i$. The rest \emph{light tasks} are assigned to the remaining processors. Each task $\tau_i$ has a unique base priority $\pi_i$, and $\pi_i<\pi_h$ implies that $\tau_i$ has a base priority lower than $\tau_h$. All jobs of $\tau_i$ and all vertices of $\tau_i$ have the same base priority $\pi_i$. 

At runtime, each heavy task is scheduled exclusively on the assigned processors according to a \emph{work-conserving} scheduler (i.e., no processor assigned to a task is idle when there is a vertex of the task is ready to be scheduled), while each light task is treated as a sequential task and is scheduled with the tasks (if one exists) assigned on the same processor. 
We focus on heavy tasks in the following and discuss how to handle both heavy and light tasks in Sec.~\ref{sec:discussions}.
 
\section{The Distributed locking protocol DPCP-p} 
\label{sec:DPCP-p}
The design of DPCP-p is based on the DPCP~\cite{DBLP:conf/rtss/RajkumarSL88} and extents it to support parallel real-time tasks under federated scheduling. 

\subsection{The Synchronization Framework}
Under federated scheduling, a resource can be shared locally or globally. A resource $\ell_q$ is a \emph{local resource} if it is shared only by the vertices of a single task, and it is a \emph{global resource} if it is shared by more than one task. For example in Fig. 1, $\ell_1$ is a global resource and $\ell_2$ is a local resource. We use $\Phi^L$ and $\Phi^G$ to denote the local resources and the global resources respectively.

Each global resource $\ell_q\in \Phi^G$ is assigned to a processor, and all requests to $\ell_q$ must execute on that processor, e.g., by means of an RPC-like agent~\cite{DBLP:conf/rtss/RajkumarSL88}. Once a vertex requests a global resource, it is suspended until the agent finishes. Requests to local resources are executed by the tasks directly, i.e., no migration is required. 

For brevity, we use $\Phi(\wp_k)$ to denote the set of global resources on processor $\wp_k$. The global resources that are assigned to the same processor as $\ell_q$ are denoted by $\Phi^{\wp}(\ell_q)$, and the global resources that are assigned to the same processors as $\tau_i$ are denoted by $\Phi^{\wp}(\tau_i)$.

\subsection{Queue Structure}
While pending, a vertex is either executing, ready and not scheduled, or suspended. The following queues are used to maintain the states of the vertices for each task.
\begin{itemize}
\item $RQ^N_i$: the ready queue of $\tau_i$ for the vertices that are ready to execute non-critical sections. The vertices in $RQ^N_i$ are scheduled in a First In First Out (FIFO) order. 
\item $RQ^L_i$: the ready queue of $\tau_i$ for the vertices that are holding local resources. The vertices in $RQ^L_i$ are scheduled in a FIFO order. If both $RQ^N_i$ and $RQ^L_i$ are not empty, the vertices in $RQ^L_i$ are scheduled first .
\item $SQ_i$: the suspended queue of $\tau_i$. Each vertex in $SQ_i$ is waiting for a request to be finished.
\end{itemize}

In addition, each processor maintains two hybrid queues to handle the global-resource-requests.
\begin{itemize}
\item $RQ^G_k$: the ready queue of the global-resource-requests on processor $\wp_k$. The requests in $RQ^G_k$ are scheduled by the priorities of the tasks.
\item $SQ^G_k$: the suspended queue of the global-resource-requests on processor $\wp_k$.
\end{itemize}  

\subsection{Locking Rules}
Under priority scheduling, the problem of \emph{priority inversion}~\cite{DBLP:conf/rtss/BrandenburgA10} is inevitable when jobs compete for shared resources. Various progress mechanisms~\cite{DBLP:conf/rtss/RajkumarSL88,DBLP:conf/icdcs/Rajkumar90,DBLP:conf/rtss/BrandenburgA10} are used to minimize the duration of priority inversions. We consider the inherent priority ceiling mechanism as used in the DPCP~\cite{DBLP:conf/rtss/RajkumarSL88} in the following.

Consider a global resource $\ell_q\in\Phi^G$ on processor $\wp_k$, the \emph{priority ceiling} of $\ell_q$ is defined as $\Pi_q=\pi^H+\max_{\tau_j\in\tau(\ell_q)}\pi_j$, where $\pi^H$ is a priority level higher than the base priority of any task in $\tau$. At runtime, the \emph{processor ceiling} of $\wp_k$ at some time $t$, denoted by $\Pi^{\wp}_k(t)$, is the maximum of the priority ceilings of the global resources that are allocated to $\wp_k$ and locked at time $t$. Let $\Re_{i,q}$ be a request from a job $J_i$ to a global resource $\ell_q\in\Phi^G$. The \emph{effective priority} of $\Re_{i,q}$ at some time $t$, denoted by $\pi^E_i(t)$, is elevated to $\pi^E_i(t)=\pi^H+\pi_i$. The priority ceiling mechanism ensures that: a global-resource-request $\Re_{i,q}$ is granted the lock at time $t$ only if $\pi^E_i(t)>\Pi^{\wp}_k(t)$.

Next, we introduce the locking rules of DPCP-p. Consider a vertex $v_{i,x}$ issues a request $\Re_{i,q}$ for a resource $\ell_q$ at some time $t$.

\textbf{Rule 1.} If $\ell_q$ is a local resource locked by another vertex at time $t$, then $v_{i,x}$ is suspended and enqueued to $SQ_i$.

\textbf{Rule 2.} If $\ell_q$ is a local resource not locked at time $t$, then $v_{i,x}$ locks $\ell_q$ and queues upon $RQ^L_i$, i.e., $v_{i,x}$ is ready to be scheduled to execute the critical section. 

\textbf{Rule 3.} If $\ell_q$ is a global resource on some processor $\wp_k$, then $v_{i,x}$ is suspended and enqueued to $SQ_i$. Meanwhile, $\Re_{i,q}$ tries to lock $\ell_q$ according to the priority ceiling mechanism. $\Re_{i,q}$ queues upon $RQ^G_k$ and is ready to be scheduled (according to its priority) if the lock is granted, otherwise $\Re_{i,q}$ is enqueued to $SQ^G_k$.

\textbf{Rule 4.} Once $\Re_{i,q}$ finishes, it releases the lock of $\ell_q$, and dequeues from $RQ^G_i$ if $\ell_q$ is a global resource. Then, $v_{i,x}$ is enqueued to $RQ^N_i$.

Fig. 1 shows an example schedule of DPCP-p with two DAG tasks on a four-core processor, and each task is assigned two processors. At time $t=2$, (i) $v_{i,2}$ is suspended and enqueued to $SQ_i$ until the global-resource-request $\Re_{i,1}$ finishes on processor $\wp_2$ at time $t=7$, (ii) $\Re_{i,1}$ is suspended and enqueued to $SQ_2^G$ until $\Re_{j,1}$ releases $\ell_1$ at time $t=4$, and (iii) $v_{i,3}$ locks a local resource $\ell_2$, enqueued to $RQ^L_i$, and is scheduled until time $t=4$, while $v_{i,4}$ is suspended and queued upon $SQ_i$ until $v_{i,3}$ releases $\ell_2$ at time $t=4$.


\begin{figure}[t]
\centering
\setlength{\belowcaptionskip}{-15pt}

\subfigure[The structures of two DAG tasks with resources $\ell_1$ (\textcolor{red}{red}) and $\ell_2$ (\textcolor{blue}{blue}).]{
\includegraphics[width=0.48\textwidth]{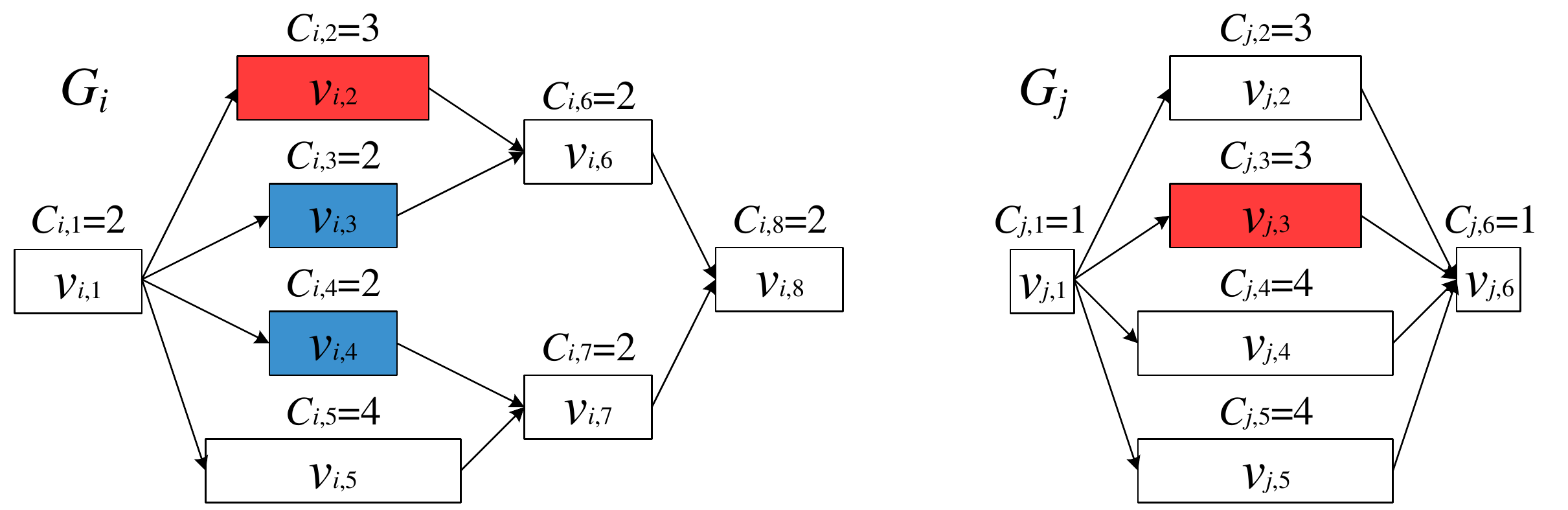}}
\subfigure[Example schedule of DPCP-p with $\ell_1$ being assigned to $\wp_2$.]{
\includegraphics[width=0.48\textwidth]{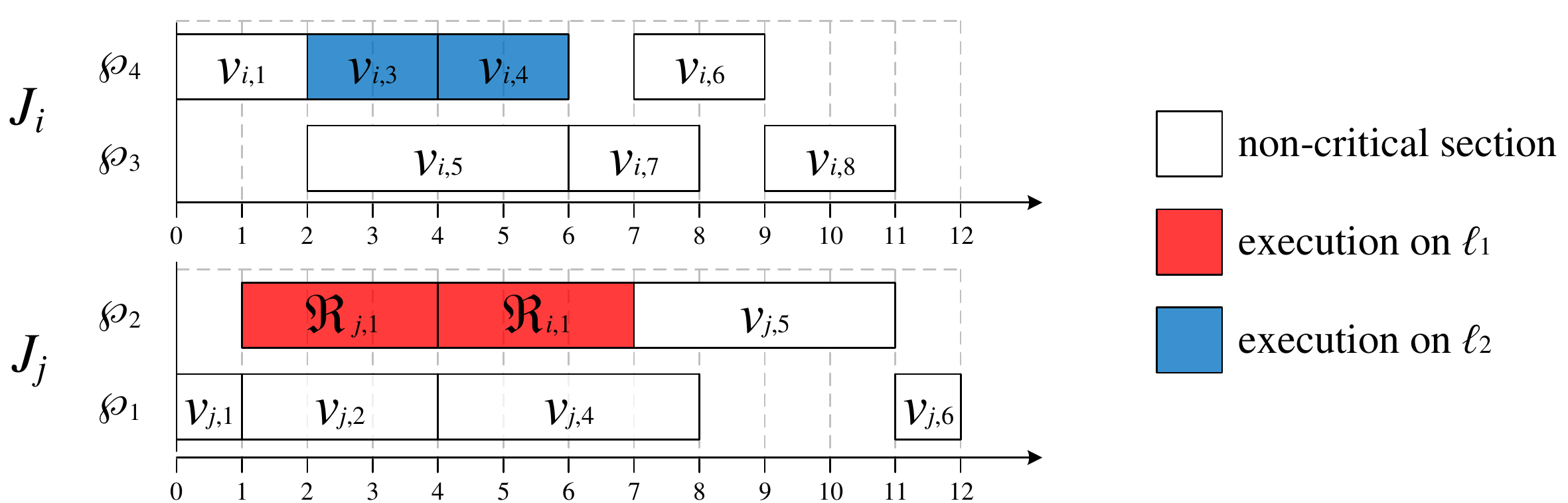}}
\caption{Example schedule of two DAG tasks.}
\end{figure}

\begin{lemma}
\label{lem:pi_blocking}
Under DPCP-p, a request can be blocked by lower-priority requests at most once.
\begin{proof}
We prove by contradiction. Since each local resource is used only by a task, we consider global-resource-requests. Suppose that a request $\Re_{i,q}$ ($\ell_q\in\Phi^G$) on a processor $\wp_k$ is blocked by at least two lower-priority requests $\Re_{a,u}$ and $\Re_{b,v}$ ($\pi_a<\pi_i$, $\pi_b<\pi_i$). Let $t_{i,s}$ and $t_{i,f}$ be the time when $\Re_{i,q}$ arrives and finishes respectively. Let $t_{a,r}$ and $t_{b,r}$ be the time when $\Re_{a,u}$ and $\Re_{b,v}$ are granted the locks respectively.  Without loss of generality, we assume that $t_{a,r}<t_{b,r}$.

While $\Re_{i,q}$ is pending at some time $t\in [t_{i,r}, t_{i,f}]$, the processor ceiling $\Pi^{\wp}_k(t)\geq\pi^H+\pi_i$ according to the priority ceiling mechanism. Since $\Re_{i,q}$ can be blocked by $\Re_{a,u}$, the priority ceiling of $\ell_u$ is larger than $\pi^H+\pi_i$, i.e., $\Pi_u\geq\pi^H+\pi_i$. Thus, $\Pi^{\wp}_k(t)\geq\pi^H+\pi_i$ during $t\in [\min(t_{i,s},t_{a,r}),t_{i,f}]$. Further, by hypothesis, $\Re_{i,q}$ is blocked by $\Re_{b,v}$, then $\Re_{b,v}$ must be granted the lock at some time $t\in (t_{a,r},t_{i,f})$. Accoring to the priority ceiling mechanism, the effective priority of $\Re_{b,v}$ must be larger than the processor ceiling at time $t$, i.e., $\pi^E_i(t)=\pi^H+\pi_b>\Pi^{\wp}_k(t)\geq\pi^H+\pi_i$. Thus, $\pi_b>\pi_i$. Contradiction.
\end{proof}
\end{lemma}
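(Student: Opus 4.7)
The plan is to argue by contradiction. Suppose that for a request $\Re_{i,q}$ to a global resource $\ell_q$ on some processor $\wp_k$ there exist two distinct lower-priority requests $\Re_{a,u}$ and $\Re_{b,v}$, with $\pi_a<\pi_i$ and $\pi_b<\pi_i$, both of which block $\Re_{i,q}$ during its pending interval $[t_{i,s},t_{i,f}]$. Let $t_{a,r}$ and $t_{b,r}$ be the instants at which these two requests acquire their respective locks, and without loss of generality assume $t_{a,r}<t_{b,r}$. I would then extract a contradiction from the priority ceiling rule $\pi^E_i(t)>\Pi^{\wp}_k(t)$.

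The first step is a short ``ceiling-lifting'' observation: because $\Re_{a,u}$ actually blocks $\Re_{i,q}$, the priority ceiling $\Pi_u$ of $\ell_u$ must satisfy $\Pi_u\geq \pi^H+\pi_i$; otherwise the ceiling rule would already have admitted $\Re_{i,q}$ when $\Re_{a,u}$ is holding $\ell_u$. Hence, from the moment $\Re_{a,u}$ is granted, the processor ceiling $\Pi^{\wp}_k$ is raised to at least $\pi^H+\pi_i$ and remains so for as long as $\ell_u$ stays locked by $\Re_{a,u}$.

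The second step evaluates the ceiling rule at $t_{b,r}$. Since $\Re_{b,v}$ is granted, its effective priority must strictly exceed the current processor ceiling, i.e.\ $\pi^H+\pi_b>\Pi^{\wp}_k(t_{b,r})$. If I can show $\Pi^{\wp}_k(t_{b,r})\geq \pi^H+\pi_i$, then $\pi_b>\pi_i$, contradicting $\pi_b<\pi_i$ and completing the argument.

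The main obstacle I anticipate is precisely closing that last inequality, because the processor ceiling can drop again once $\Re_{a,u}$ releases $\ell_u$, and a priori $\Re_{b,v}$ might be granted in such a low-ceiling window. I would bridge this by a continuity-of-blocking argument: if $\Pi^{\wp}_k$ ever dipped below $\pi^H+\pi_i$ strictly before $t_{b,r}$, then at that instant the rule would have granted $\Re_{i,q}$ its own lock, so $\Re_{i,q}$ would cease to be pending and $\Re_{b,v}$ could not subsequently block it, contradicting the standing hypothesis. Apart from this bit of bookkeeping, the proof is a direct application of the ceiling rule and should fit in a few lines.
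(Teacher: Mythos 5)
Your proposal is correct and follows essentially the same contradiction argument as the paper: the blocking by $\Re_{a,u}$ lifts the processor ceiling to at least $\pi^H+\pi_i$, and applying the grant rule to $\Re_{b,v}$ at $t_{b,r}$ forces $\pi_b>\pi_i$, contradicting $\pi_b<\pi_i$. The explicit ``continuity-of-blocking'' step you add is precisely what the paper compresses into the bare assertion that $\Pi^{\wp}_k(t)\geq\pi^H+\pi_i$ throughout $[\min(t_{i,s},t_{a,r}),t_{i,f}]$ ``according to the priority ceiling mechanism,'' so it is a welcome clarification rather than a different route.
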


\section{Worst-Case Response Time Analysis}
\label{sec:rta}

We derive an upper bound of the WCRT of an arbitrary path of $\tau_i$ using the fixed-point Response-Time Analysis (RTA) in this section. Let $r_i$ be the WCRT of an arbitrary path $\lambda_i$, then $R_i$ can be upper bounded by the maximum of the WCRTs of the paths, that is

\begin{equation}
\label{eq:wcrt}
R_i=\max\{r_i\}.
\end{equation}

To upper bound $r_i$, we classify the delays of a path into four categories as follows. 


\subsection{Blocking and Interference}
\label{sec:blocking_interference}
First, we consider a global-resource-request $\Re_{j,q}$ of a job $J_j$ ($i\neq j$, $\ell_q\in\Phi^G$) that causes $\lambda_i$ to incur
\begin{itemize}
\item \emph{inter-task blocking}, if an agent on behalf of $\Re_{j,q}$ is executing on some processor $\wp_k$ while $\lambda_i$ is suspended on a global resource $\ell_u\in\Phi^G$ on $\wp_k$.
\end{itemize}

Second, a vertex $v_{i,y}$ of $\tau_i$ that is not on $\lambda_i$ (i.e., $v_{i,y}\notin\lambda_i$) causes $\lambda_i$ to incur
\begin{itemize}
\item \emph{intra-task blocking}, if $v_{i,y}$ is holding a local resource $\ell_q\in\Phi^L$ and scheduled while $\lambda_i$ is suspended on $\ell_q$, or if an agent is executing on behalf of $v_{i,y}$ on some processor $\wp_k$ while $\lambda_i$ is suspended on a global resource on $\wp_k$; and
\item \emph{intra-task interference}, if $v_{i,y}$ is executing a non-critical section or a local-resource-request while $\lambda_i$ is ready and not executing.
\end{itemize}

Third, a global-resource-request from another job or from a vertex that is not on $\lambda_i$ causes $\lambda_i$ to incur
\begin{itemize}
\item \emph{agent interference}, if an agent on behalf of the request is executing while $\lambda_i$ is (i) ready and not executing, or (ii) suspended on a local resource and the resource holder is not scheduled (i.e., preempted by the agent of the request).
\end{itemize}

Notabaly, the defined delays are mutually exclusive, i.e., at any point in time, a vertex or an agent can cause a path to incur at most one type of delay. This is essential to minimize over-counting in the blocking time analysis.
For example in Fig. 1(b), at any time during $t=[2,4]$, $\Re_{j,1}$ only causes path $(v_{i,1},v_{i,2},v_{i,6},v_{i,8})$ to incur inter-task blocking, $v_{i,3}$ only causes path $(v_{i,1},v_{i,4},v_{i,7},v_{i,8})$ to incur intra-task blocking, $v_{j,2}$ only causes path $(v_{j,1},v_{j,4},v_{j,6})$ to incur intra-task interference, and $\Re_{j,1}$ only causes path $(v_{j,1},v_{j,5},v_{j,6})$ to incur agent interference. It is also noted that a path can incur multiple types of delays at a time. For instance, at any time during $t=[1,4]$, path $(v_{j,1},v_{j,5},v_{j,6})$ incurs intra-task interference and agent interference due to $v_{j,2}$ and $\Re_{j,1}$ respectively. 

Based on these definitions, we derive an upper bound on $r_i$ in Theorem~\ref{theorem:rta-path}. In preparation, we use $B_i$ to denote the workload of the other tasks that causes $\lambda_i$ to incur inter-task blocking. Analogously, let $b_i$ and $I_i^{\text{intra}}$ denote the workloads of the vertices of $\tau_i$ not on $\lambda_i$ that cause $\lambda_i$ to incur intra-task blocking and intra-task interference, respectively. Let $I_i^A$ denote the workload of the agents that causes $\lambda_i$ to incur agent interference. 
These open variables will be bounded in Sect. \ref{sec:blocking} and \ref{sec:interference}.

\begin{theorem}
\label{theorem:rta-path}
$r_i\leq\mathcal{L}(\lambda_i)+B_i+b_i+(I_i^{\text{intra}}+I_i^A)/m_i$.
\end{theorem}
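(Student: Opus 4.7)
The plan is a work-time decomposition of $[0,r_i]$. At each instant $t$, I would look at the earliest unfinished vertex $v(t)$ of $\lambda_i$ in topological order and classify $t$ into one of three mutually exclusive states: (E) $v(t)$ is making progress---it is running locally on some processor of $\wp(\tau_i)$ or the agent servicing its own current global request is running on the designated remote processor; (S) $v(t)\in SQ_i$ but no such own-agent is currently running; or (R) $v(t)\in RQ^N_i\cup RQ^L_i$ is not scheduled on any processor of $\wp(\tau_i)$. Writing $T_E$, $T_S$, $T_R$ for the total lengths of these three sets gives $r_i=T_E+T_S+T_R$, and I would bound each term in turn.

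For $T_E$, the DAG precedence serializes the vertices of $\lambda_i$, so at most one vertex is current at any instant, and each $v_{i,x}\in\lambda_i$ remains in state (E) for at most $C_{i,x}$ time units (local execution plus the total service time of its own agents, both already included in $C_{i,x}$ by the model). Summing over the path, $T_E\leq\sum_{v_{i,x}\in\lambda_i}C_{i,x}=\mathcal{L}(\lambda_i)$.

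For $T_R$, I would invoke the work-conserving property of the federated scheduler restricted to $\wp(\tau_i)$: during every instant of $T_R$ all $m_i$ dedicated processors must be busy with non-path work. By the exhaustive and per-source mutually exclusive classification of Sec.~\ref{sec:blocking_interference}, such work can only be (i) a non-path vertex of $\tau_i$ executing a non-critical or local-critical section (charged to $I_i^{\text{intra}}$), or (ii) an agent running on a processor of $\wp(\tau_i)$ (charged to $I_i^A$). Hence $m_i\cdot T_R\leq I_i^{\text{intra}}+I_i^A$, so $T_R\leq(I_i^{\text{intra}}+I_i^A)/m_i$.

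For $T_S$, I would attribute each suspended instant to exactly one unit of blocking workload executing on a single processor: when $v(t)$ waits on a global resource (Rule~3), Lemma~\ref{lem:pi_blocking} together with the priority-ceiling mechanism implies that exactly one request is being serviced ahead of it on the remote processor---charged to $B_i$ if it belongs to another task, or to $b_i$ if it belongs to another vertex of $\tau_i$; when $v(t)$ waits on a local resource (Rule~1), the unique holder is a sibling vertex of $\tau_i$ executing its local critical section, charged to $b_i$. This gives $T_S\leq B_i+b_i$, and combining the three bounds proves the theorem. The main obstacle I expect is the delicate reconciliation of agent-interference case (ii)---where the local-resource holder is preempted by an agent on a processor of $\wp(\tau_i)$---since that instant technically falls in $T_S$ under my classification yet is caused by an agent rather than by the holder. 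The natural fix is to fold such instants into the rate-$m_i$ budget used for $T_R$, exploiting the fact that the remaining $m_i-1$ processors of $\wp(\tau_i)$ are still work-conserving and thus provide the required parallel interference; showing this absorption carefully, without double counting the holder's own critical section against $b_i$, is the key technical step.
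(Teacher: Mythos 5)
Your proposal is correct and follows essentially the same route as the paper's proof: the same partition of the pending time into executing, ready-but-not-scheduled, and suspended states, with the executing portion bounded by $\mathcal{L}(\lambda_i)$, the suspended portion charged to $B_i+b_i$, and the ready portion divided by $m_i$ via work-conservation. The subtlety you flag---instants where $\lambda_i$ is suspended on a local resource whose holder is preempted by an agent---is resolved in the paper exactly as you suggest, by folding those instants into the $m_i$-parallel budget together with the ready-and-not-executing time (there the holder itself is a ready, non-executing vertex, so all $m_i$ processors are busy and the full agent workload is absorbed into $I_i^A$).
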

\begin{proof}
While $\lambda_i$ is pending, it is either (I) executing, (II) suspended and executing on global resources, (III) ready and not executing, (IV) suspended and not executing on any global resource. By definition, the duration of (I) and (II) can be bounded by $\mathcal{L}(\lambda_i)$.

\emph{For case (III)}. The workload executed on $\wp(\tau_i)$ can be from (i) the vertices of $\tau_i$ not on $\lambda_i$ (i.e., intra-task interference), and (ii) the agents on behalf of the vertices not on $\lambda_i$ (i.e., agent interference). By definition, the workload of (i) can be upper-bounded by $I_i^{\text{inter}}$, and the workload of (ii), denoted by $\hat{I}_i^A$, is a part of $I_i^A$. 

\emph{For case (IV)}. If $\lambda_i$ is suspended on a local resource $\ell_q$, then $\lambda_i$ is either (iii) waiting a vertex of $\tau_i$ not on $\lambda_i$ to release $\ell_q$ (i.e., intra-task blocking), or (iv) waiting the agents that preempted the resource holder to finish (i.e., agent interference). If $\lambda_i$ is suspended on a global resource on a processor $\wp_k$, then it can be delayed by (v) an agent on behalf of another task on $\wp_k$ (i.e., inter-task blocking), or (vi) an agent on behalf of a vertex of $\tau_i$ not on $\lambda_i$ on $\wp_k$ (i.e., intra-task blocking). By definition, the duration of (iii) and (vi) is $b_i$, and the duration of (v) is $B_i$. Further, for case (iv), we let the workload of the agents be $\check{I}_i^A$.

\emph{Total durations of (I) - (IV)}. In (III) and (IV)-(iv), there is at least a vertex ready and not executing, thus none of the $m_i$ processors is idle according to work-conserving scheduling. Let the duration of (III) and (IV)-(iv) be $Y_i$, then $I_i^{\text{intra}}+\hat{I}_i^A+\check{I}_i^A=Y_i\cdot m_i$. By definition, $\hat{I}_i^A+\check{I}_i^A\leq I_i^A$. Hence, $Y_i\leq(I_i^{\text{intra}}+I_i^A)/m_i$. Summing up (I) - (IV), we have $r_i\leq\mathcal{L}(\lambda_i)+B_i+b_i+(I_i^{\text{intra}}+I_i^A)/m_i$.
\end{proof}

\subsection{Upper Bounds on Blockings}
\label{sec:blocking}
We begin with the analysis of inter-task blocking. To derive an upper bound on $B_i$, we first derive an upper bound on the response time of a global-resource-request. 

In preparation, let $\eta_j(L)$ denote the maximum number of jobs of a task $\tau_j$ during a time interval of length $L$. It has been well studied that $\eta_j(L)\leq\lceil (L+R_j)/T_j \rceil$. 
Further, let $\gamma_{i,q}(L)$ be the cumulative length of the requests from higher-priority tasks of $\tau_i$ to the global resources that are assigned on the same processor as $\ell_q\in\Phi^G$ during a time interval of length $L$. Since there are $\eta_h(L)$ jobs of each higher-priority task $\tau_h$ ($\pi_h>\pi_i$) during a time interval length of $L$, and each job $J_h$ uses resource $\ell_q$ for a time of at most $N_{h,q}\cdot L_{h,q}$, summing up the workload of all the higher-priority requests we have 
\begin{equation}
\gamma_{i,q}(L)\leq\underset{\pi_h>\pi_i\wedge\ell_u\in\Phi^{\wp}(\ell_q)}{\sum} \eta_h(L) \cdot N_{h,u} \cdot L_{h,u}.
\end{equation}

Let $W_{i,q}$ be the response time of a request from $\lambda_i$ to a global resource $\ell_q\in\Phi^G$. We bound $W_{i,q}$ according to the following lemma. 
\begin{lemma}
\label{lem:res-hold-time}
$W_{i,q}$ can be upper bounded by the least positive solution, if one exists, of the following equation.
\begin{equation}
W_{i,q}=L_{i,q}+\underset{\ell_u\in\Phi^{\wp}(\ell_q)}{\sum}(N_{i,u}-N_{i,u}^{\lambda})\cdot L_{i,u} +\beta_{i,q}+\gamma_{i,q}(W_{i,q}).
\end{equation}
Where, $\beta_{i,q}=\max\{L_{j,u}|\pi_j<\pi_i\wedge\ell_u\in\Phi^{\wp}(\ell_q)\wedge\Pi_u\geq\pi^H+\pi_i\}.$
\end{lemma}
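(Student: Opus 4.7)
The plan is to bound $W_{i,q}$ by a fixed-point RTA argument on the processor $\wp_k$ where $\ell_q$ resides, tracking $\Re_{i,q}$ from its arrival at $\wp_k$ to its completion. I would classify every time unit during the response time into four disjoint categories that correspond one-to-one with the four summands on the right-hand side: (i) the agent of $\Re_{i,q}$ itself executes; (ii) an agent for a vertex of the same job $J_i$ that is not on $\lambda_i$ executes on $\wp_k$; (iii) the unique lower-priority blocker permitted by Lemma~\ref{lem:pi_blocking} executes; (iv) a strictly higher-priority agent executes on $\wp_k$.

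For (i) the agent runs for exactly $L_{i,q}$. For (iii) Lemma~\ref{lem:pi_blocking} already gives that at most one lower-priority request can block $\Re_{i,q}$, and the priority-ceiling rule $\pi^E_i(t)>\Pi^{\wp}_k(t)$ forces that blocker to target some $\ell_u\in\Phi^{\wp}(\ell_q)$ whose ceiling satisfies $\Pi_u\geq\pi^H+\pi_i$; maximising $L_{j,u}$ over this set yields $\beta_{i,q}$. For (iv) a standard carry-in count applies: during any window of length $W_{i,q}$ each higher-priority task $\tau_h$ releases at most $\eta_h(W_{i,q})$ jobs, and each such job can demand at most $N_{h,u}\cdot L_{h,u}$ agent time per $\ell_u\in\Phi^{\wp}(\ell_q)$, summing exactly to $\gamma_{i,q}(W_{i,q})$. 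For (ii) I would invoke the priority-ceiling rule inside $J_i$ itself: because $J_i$ is parallel, sibling vertices off $\lambda_i$ may concurrently dispatch their own global requests to resources in $\Phi^{\wp}(\ell_q)$; each such request carries effective priority $\pi^H+\pi_i$ identical to that of $\Re_{i,q}$, and once one of them holds a lock the processor ceiling reaches $\pi^H+\pi_i$, so $\Re_{i,q}$ cannot overtake it. The count of such sibling requests to $\ell_u$ within one invocation of $J_i$ is precisely $N_{i,u}-N_{i,u}^{\lambda}$, each of length $L_{i,u}$, producing the middle sum.

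The main obstacle is establishing that the four categories are pairwise disjoint and jointly exhaust the response-time window, so that the sum is a valid upper bound rather than an over- or under-count. I would argue exhaustiveness from work-conservation of $RQ^G_k$: whenever $\Re_{i,q}$ is pending but not being served, $\wp_k$ must be executing some other agent of strictly higher, equal (same-task), or strictly lower effective priority, which is exactly the trichotomy (iv), (ii), (iii). Disjointness of (ii) and (iv) follows because (iv) ranges only over $\pi_h>\pi_i$ strictly, while (ii) concerns sibling work of $\tau_i$ itself; (iii) contributes at most one request by Lemma~\ref{lem:pi_blocking}, and it cannot overlap with (ii) or (iv) because those agents have effective priority $\geq\pi^H+\pi_i$ and would preempt the blocker on $\wp_k$. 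Once this accounting is verified, the right-hand side of the equation is monotone non-decreasing in $W_{i,q}$, so the usual fixed-point iteration starting from $W_{i,q}^{(0)}=L_{i,q}$ either diverges or converges to the least positive solution, which is precisely the claimed bound.
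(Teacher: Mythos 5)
Your proof is correct and follows essentially the same route as the paper's: the same decomposition of the response-time window into the request's own execution, the single lower-priority blocker bounded via the ceiling condition $\Pi_u\geq\pi^H+\pi_i$, intra-task sibling requests bounded by $\sum_{\ell_u\in\Phi^{\wp}(\ell_q)}(N_{i,u}-N_{i,u}^{\lambda})\cdot L_{i,u}$, and higher-priority demand bounded by $\gamma_{i,q}(W_{i,q})$. You merely make explicit the disjointness/exhaustiveness accounting and the fixed-point convergence step that the paper leaves implicit.
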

\begin{proof}
Under DPCP-p, a global-resource-request $\Re_{i,q}$ has an effective priority higher than $\pi^H$. Thus, while $\Re_{i,q}$ is pending, only the global-resource-requests can execute. Since global-resource-requests are scheduled by their priorities, $\Re_{i,q}$ may wait for (i) at most one lower-priority request to a global resource with priority ceiling no less than $\pi^H+\pi_i$ on the processor, (ii) intra-task requests from the vertices not on $\lambda_i$ to the global resources on the processor, and (iii) higher-priority requests to the global resources on the processor. 

Be definition, (i) can be bounded by $\beta_{i,q}$, and (ii) can be bounded by $\sum_{\ell_u\in\Phi^{\wp}(\ell_q)}(N_{i,u}-N_{i,u}^{\lambda})\cdot L_{i,u}$. By the definition of $\gamma_{i,q}(L)$, (iii) can be bounded by $\gamma_{i,q}(W_{i,q})$. In addition, $\Re_{i,q}$ executes at most $L_{i,q}$. We claim the lemma by summing up the respective bounds.
\end{proof}

With Lemma~\ref{lem:res-hold-time} in place, we are ready to upper bound $B_i$.
\begin{lemma}
\label{lem:g-blocking}
$B_i\leq\sum_{\wp_k\in\wp}\min(\varepsilon_i^k, \zeta_i^k)$,
where,
\begin{equation}
\label{eq:varepsilon}
\varepsilon_i^k=\underset{\ell_q\in\Phi^G\cap\Phi(\wp_k)}{\sum} \left( \beta_{i,q}+\gamma_{i,q}(W_{i,q}) \right) \cdot N_{i,q}^{\lambda},
\end{equation}
and
\begin{align}
\label{eq:zata}
\zeta_i^k=\sum_{\tau_j\neq\tau_i} \sum_{\ell_q\in\Phi^G\cap\Phi(\wp_k)} \eta_j(r_i)\cdot N_{j,q} \cdot L_{j,q}.
\end{align}
\end{lemma}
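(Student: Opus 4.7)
The plan is to decompose $B_i$ processor by processor and prove the two bounds independently on each processor. Let $B_i^k$ denote the total inter-task blocking time charged to $\lambda_i$ from agents executing on $\wp_k$. By the definition of inter-task blocking, any such delay requires $\lambda_i$ to be suspended on some global resource on $\wp_k$ while an agent of another task is executing on $\wp_k$, so $B_i = \sum_{\wp_k \in \wp} B_i^k$. It then suffices to establish $B_i^k \leq \varepsilon_i^k$ and $B_i^k \leq \zeta_i^k$ separately, and take the minimum on each processor before summing.

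For the first bound $B_i^k \leq \varepsilon_i^k$, I would reuse the decomposition provided by Lemma~\ref{lem:res-hold-time} together with Lemma~\ref{lem:pi_blocking}. Inter-task blocking on $\wp_k$ can accumulate only while $\lambda_i$ is suspended on a request $\Re_{i,q}$ to some $\ell_q \in \Phi^G \cap \Phi(\wp_k)$. Within the response window $W_{i,q}$, Lemma~\ref{lem:res-hold-time} accounts for four kinds of workload: (a) the self-execution $L_{i,q}$, which is not blocking; (b) the intra-task global-resource-requests $\sum (N_{i,u}-N_{i,u}^{\lambda}) L_{i,u}$, which are intra-task blocking and already accounted for by $b_i$; (c) at most one lower-priority request of length at most $\beta_{i,q}$, justified by Lemma~\ref{lem:pi_blocking}; and (d) the cross-task higher-priority requests of total length at most $\gamma_{i,q}(W_{i,q})$. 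Only the last two constitute inter-task blocking, so each request from $\lambda_i$ to $\ell_q$ contributes at most $\beta_{i,q} + \gamma_{i,q}(W_{i,q})$ to $B_i^k$. Summing over the $N_{i,q}^{\lambda}$ requests of $\lambda_i$ to $\ell_q$ and over all $\ell_q \in \Phi^G \cap \Phi(\wp_k)$ yields $\varepsilon_i^k$.

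For the second bound $B_i^k \leq \zeta_i^k$ I would switch to a supply-side argument: since $\lambda_i$ completes within $r_i$, any agent that delays $\lambda_i$ on $\wp_k$ must execute within a window of length at most $r_i$. Each other task $\tau_j$ releases at most $\eta_j(r_i)$ jobs in this window, and each such job issues at most $N_{j,q}$ requests of length at most $L_{j,q}$ to each global resource $\ell_q$ on $\wp_k$. Summing over $\tau_j \neq \tau_i$ and $\ell_q \in \Phi^G \cap \Phi(\wp_k)$ caps the total foreign agent workload on $\wp_k$ by $\zeta_i^k$, which trivially bounds $B_i^k$.

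The main obstacle I anticipate is the bookkeeping in the first bound, specifically showing that the per-request decomposition does not double count blocking with $b_i$ or over-attribute lower-priority blocking across distinct requests of $\lambda_i$. The mutual exclusivity of the four delay categories defined in Section~\ref{sec:blocking_interference} is what makes the separation of (b) from (c)/(d) clean, and Lemma~\ref{lem:pi_blocking} is precisely what allows charging at most one $\beta_{i,q}$ per outstanding request, so the sum over the $N_{i,q}^{\lambda}$ requests remains tight.
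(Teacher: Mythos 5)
Your proposal is correct and follows essentially the same route as the paper's own proof: bounding each of the $N_{i,q}^{\lambda}$ requests by one lower-priority term $\beta_{i,q}$ (via Lemma~\ref{lem:pi_blocking}) plus the higher-priority term $\gamma_{i,q}(W_{i,q})$ exactly as in Lemma~\ref{lem:res-hold-time}, then capping the same quantity by the total foreign request workload $\zeta_i^k$ over a window of length $r_i$, and taking the per-processor minimum. Your additional remarks on the per-processor decomposition and on why the intra-task terms do not double count with $b_i$ are finer-grained than what the paper writes but do not change the argument.
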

\begin{proof}
Each time $\lambda_i$ requests a global resource $\ell_q\in\Phi^G$ on $\wp_k$, it can be blocked by (i) at most one lower-priority request and (ii) all higher-priority requests. Analogous to the proof in Lemma~\ref{lem:res-hold-time}, (i) can be bounded by $\beta_{i,q}$, and (ii) can be bounded by $\gamma_{i,q}(W_{i,q})$. 
Since $\lambda_i$ requests each global resource $\ell_q$ at most $N_{i,q}^{\lambda}$ times, the workload of the other tasks that cause $\lambda_i$ to incur inter-task blocking on $\wp_k$ can be bounded by $\varepsilon_i^k$ in Eq.~(\ref{eq:varepsilon}).

Further, each other task $\tau_j$ ($j\neq i$) has at most $\eta_j(r_i)$ jobs before $\lambda_i$ finishes, and each job uses a resource $\ell_q$ for a time of at most $N_{j,q}\cdot L_{j,q}$. Thus, the workload of the other tasks for the global resources on $\wp_k$ can be bounded by $\zeta_i^k$ in Eq.~(\ref{eq:zata}).
We claim the lemma by summing up the minimum of $\varepsilon_i^k$ and $\zeta_i^k$ for all processors. 
\end{proof}

Next, we derive an upper bound for intra-task blocking. For brevity, let $\sigma_{i,k}=\min(1, \sum_{\ell_u\in\Phi(\wp_k)}N^{\lambda}_{i,u})$. Intuitively, $\sigma_{i,k}=1$ if there is a vertex on path $\lambda_i$ requests a global resource $\ell_q$ on processor $\wp_k$, and $\sigma_{i,k}=0$ otherwise.
\begin{lemma}
\label{lem:intra-task-blocking}
$b_i\leq\underset{\ell_q\in\Phi^L\cap\Phi(\tau_i)}{\sum}b_{i,q}^L+\sum_{\wp_k\in\wp}b_i^G$, where,
\begin{equation}
\label{eq:intra-b-local}
b_{i,q}^L=\min(1,N_{i,q}^{\lambda})\cdot(N_{i,q}-N_{i,q}^{\lambda})\cdot L_{i,q},
\end{equation}
and, 
\begin{equation}
\label{eq:intra-b-global}
b_i^G=\sigma_{i,k}\cdot\sum_{\ell_q\in\Phi(\wp_k)}(N_{i,q}-N_{i,q}^{\lambda})\cdot L_{i,q}. 
\end{equation}
\end{lemma}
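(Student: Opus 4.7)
The plan is to partition the intra-task blocking workload $b_i$ by resource scope (local versus global) and by the processor on which the blocking agent runs, then to bound each part by the total workload of off-path vertices on that resource or that processor.

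First, I would handle local-resource blocking. By the definition of intra-task blocking in Section \ref{sec:blocking_interference}, $\lambda_i$ is delayed on a local resource $\ell_q\in\Phi^L$ only when (a) $\lambda_i$ itself issues a request for $\ell_q$, which requires $N_{i,q}^\lambda\geq 1$, and (b) a vertex $v_{i,y}\notin\lambda_i$ is scheduled while holding $\ell_q$. Since critical sections are non-nested and single-writer, every off-path critical section on $\ell_q$ has length at most $L_{i,q}$, there are at most $N_{i,q}-N_{i,q}^\lambda$ of them, and each one can delay $\lambda_i$ at most for its own length before releasing the lock. Summing these contributions and noting that the factor $\min(1,N_{i,q}^\lambda)$ encodes the necessary condition $N_{i,q}^\lambda\geq 1$ yields the per-resource bound $b_{i,q}^L$ of Eq.~(\ref{eq:intra-b-local}).

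Second, I would treat global-resource blocking on each processor $\wp_k$. By the same definition, $\lambda_i$ can incur intra-task blocking on $\wp_k$ from off-path global-resource-requests only while $\lambda_i$ is suspended on some resource of $\Phi^G\cap\Phi(\wp_k)$; this forces at least one vertex on $\lambda_i$ to request a global resource on $\wp_k$, i.e., $\sigma_{i,k}=1$. When $\sigma_{i,k}=1$, the total processor time consumed on $\wp_k$ by agents on behalf of off-path vertices of $\tau_i$ is at most $\sum_{\ell_q\in\Phi(\wp_k)}(N_{i,q}-N_{i,q}^\lambda)\cdot L_{i,q}$, because each such $\ell_q$ receives at most $N_{i,q}-N_{i,q}^\lambda$ off-path requests, each of length at most $L_{i,q}$; terms with $N_{i,q}=0$ vanish, so restricting to $\ell_q\in\Phi(\wp_k)$ suffices. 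This gives $b_i^G$ of Eq.~(\ref{eq:intra-b-global}). Finally, since intra-task blockings attributable to distinct local resources and to distinct processors for global resources never overlap in the attribution scheme of Section \ref{sec:blocking_interference}, summing over $\ell_q\in\Phi^L\cap\Phi(\tau_i)$ and $\wp_k\in\wp$ gives the claimed bound on $b_i$.

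The main obstacle I expect is justifying that each off-path critical section contributes at most its own length to the cumulative blocking of $\lambda_i$, so that the total blocking never exceeds the total off-path critical-section workload, even when $\lambda_i$ requests the same resource repeatedly. For local resources, this is underpinned by the FIFO order on $RQ^L_i$ together with non-nesting: an off-path holder of $\ell_q$, once finished, releases the lock and re-enters $RQ^N_i$, so any later blocking on $\ell_q$ must be caused by a different off-path request. The analogous property for agents follows from Rule~4, which dequeues a global-resource-request from $RQ^G_k$ and releases its lock as soon as the agent completes. Once this invariant is established the remaining counting is routine.
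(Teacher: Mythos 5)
Your proof is correct and follows essentially the same route as the paper's: split $b_i$ into per-local-resource and per-processor global contributions, use $\min(1,N_{i,q}^{\lambda})$ and $\sigma_{i,k}$ as indicators for whether $\lambda_i$ can be blocked there at all, and bound each part by the total off-path critical-section workload $(N_{i,q}-N_{i,q}^{\lambda})\cdot L_{i,q}$. Your closing discussion of why each off-path critical section contributes at most its own length (via FIFO ordering, non-nesting, and Rule~4) is a detail the paper's proof leaves implicit, but it does not change the argument.
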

\begin{proof}
By definition, $\lambda_i$ incurs intra-task blocking on a local resource $\ell_q\in\Phi^L$ only if it requests $\ell_q$. Clearly, $\min(1, N_{i,q}^{\lambda})=1$ if $\lambda_i$ requests $\ell_q$, and $\min(1, N_{i,q}^{\lambda})=0$ otherwise. Given that the vertices of $\tau_i$ not on $\lambda_i$ can execute on a resource $\ell_q$ for a total of at most $(N_{i,q}-N_{i,q}^{\lambda})\cdot L_{i,q}$, $\lambda_{i,q}$ incurs intra-task blocking for at most $b_{i,q}^L$, as in Eq.~(\ref{eq:intra-b-local}).

Moreover, $\lambda_i$ incurs intra-task blocking on a global resource on some processor $\wp_k$ only if it requests some global resource on $\wp_k$. By definition, $\sigma_{i,k}=1$ if $\lambda_i$ requests some global resource on $\wp_k$, and $\sigma_{i,k}=0$ otherwise. Thus, the workload that cause $\lambda_i$ to incur intra-task blocking on $\wp_k$ can be bounded by summing up $(N_{i,q}-N_{i,q}^{\lambda})\cdot L_{i,q}$ for all the global resources on $\wp_k$, i.e., $b_i^{\wp}$, as in Eq.~(\ref{eq:intra-b-global}).

Thus, $b_i$ can be bounded by summing up (i) $b_{i,q}^L$ for all local resource in $\Phi(\tau_i)$, and (ii) $b_i^{\wp}$ for all processors.
\end{proof}

\subsection{Upper Bounds on Interference}
\label{sec:interference}
Next, we derive upper bounds for the intra-task interference and the agent interference. First, the intra-task interference of $\lambda_i$ can be upper bounded by the workload of the non-critical sections and the local-resource-requests of the vertices of $\tau_i$ that are not on $\lambda_i$.
\begin{lemma}
\label{lem:intra-task-interf}
$I_i^{\text{intra}}\leq\underset{v_{i,x}\notin\lambda_i}{\sum}C_{i,x}^{\prime}+\underset{\ell_q\in\Phi^L}{\sum}(N_{i,q}-N_{i,q}^{\lambda})\cdot L_{i,q}$.
\end{lemma}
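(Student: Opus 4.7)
The plan is to unfold the definition of intra-task interference and then separately bound the two kinds of workload that a vertex $v_{i,y}\notin\lambda_i$ can execute on $\wp(\tau_i)$ while $\lambda_i$ is ready and not executing. Recall that global-resource-requests are handled remotely by agents and are therefore accounted for under \emph{agent interference}, not intra-task interference; so the only work of $v_{i,y}$ that contributes to $I_i^{\text{intra}}$ is (a) its non-critical sections and (b) its local critical sections. Hence it suffices to upper bound the total amount of such work that can be produced by the vertices of $\tau_i$ outside of $\lambda_i$.

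For (a), I would simply aggregate the non-critical WCETs of all vertices of $\tau_i$ that do not lie on $\lambda_i$, giving the first summand $\sum_{v_{i,x}\notin\lambda_i}C'_{i,x}$. For (b), I would fix a local resource $\ell_q\in\Phi^L$ and observe that $\tau_i$ issues a total of $N_{i,q}$ requests to $\ell_q$, of which $N_{i,q}^{\lambda}$ come from vertices on $\lambda_i$ by the definition of $N_{i,q}^{\lambda}$ given in Sect.~\ref{sec:model}. The remaining $N_{i,q}-N_{i,q}^{\lambda}$ requests are issued by vertices not on $\lambda_i$ and each lasts at most $L_{i,q}$, so they contribute at most $(N_{i,q}-N_{i,q}^{\lambda})\cdot L_{i,q}$; summing over $\ell_q\in\Phi^L$ yields the second summand.

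Adding the two bounds gives the claim. The argument is essentially a counting/accounting argument, so the main subtlety—rather than a technical obstacle—is the justification that global critical sections of $v_{i,y}\notin\lambda_i$ must be \emph{excluded} from $I_i^{\text{intra}}$; this is exactly the mutual exclusivity of the four delay categories stated right after their definitions in Sect.~\ref{sec:blocking_interference}, which ensures that the same agent workload is not double-counted here and under $I_i^A$. Once that is highlighted, the bound follows immediately by summing (a) and (b) as described.
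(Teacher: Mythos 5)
Your proof is correct and follows essentially the same route as the paper's: both unfold the definition of $I_i^{\text{intra}}$ into non-critical-section workload and local-resource-request workload of vertices off $\lambda_i$, bound each by the corresponding sum, and add. Your extra remark on why global critical sections are excluded is already built into the paper's definition of intra-task interference, so it is a clarification rather than a new step.
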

\begin{proof}
By definition, $I_i^{\text{intra}}$ consists of the workload of (i) the non-critical sections and (ii) the local-resource-requests from the vertices of $\tau_i$ that are not on $\lambda_i$. From the task model, (i) and (ii) are bounded by $\sum_{v_{i,x}\notin\lambda_i}C_{i,x}^{\prime}$ and $\sum_{\ell_q\in\Phi^L}(N_{i,q}-N_{i,q}^{\lambda})\cdot L_{i,q}$, respectively. Thus, $I_i^{\text{intra}}$ can be bounded by the total of (i) and (ii).
\end{proof}

For each global resource on $\Phi^{\wp}(\tau_i)$, the agent interference of $\lambda_i$ consists of the agent workload of the vertices that are not on $\lambda_i$.
\begin{lemma}
\label{lem:agent-interf}
$I_i^A\leq \underset{\ell_q\in\Phi^G\cap\Phi^{\wp}(\tau_i)}{\sum} (I^A_{i,q}+\breve{I}^A_{i,q})$,
where, 
\begin{equation}
I^A_{i,q}=\sum_{\tau_j\neq\tau_i}\eta_j(r_i)\cdot N_{j,q}\cdot L_{j,q},
\end{equation}
and,
\begin{equation}
\breve{I}^A_{i,q}=(N_{i,q}-N^{\lambda}_{i,q})\cdot L_{i,q}.
\end{equation}
\end{lemma}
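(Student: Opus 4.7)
The plan is to decompose $I_i^A$ first by the global resources that generate the agent workload, and then by the origin of each request. The first observation is that agent interference can only be caused by agents executing on processors in $\wp(\tau_i)$: by definition this delay requires either $\lambda_i$ to be ready and not executing, or a vertex of $\tau_i$ holding a local resource to be preempted; in both cases the obstructing agent must run on a core assigned to $\tau_i$. Hence only agents serving requests to global resources in $\Phi^G\cap\Phi^{\wp}(\tau_i)$ need be summed, matching the outer sum in the statement.

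Next, for each $\ell_q\in\Phi^G\cap\Phi^{\wp}(\tau_i)$ I would partition the causing workload into two disjoint parts: (a) requests issued by other tasks $\tau_j\neq\tau_i$, and (b) requests issued by vertices of $\tau_i$ that are not on $\lambda_i$. Requests issued by vertices on $\lambda_i$ itself are excluded, since while their agents execute $\lambda_i$ is suspended on a global resource and therefore incurs inter- or intra-task blocking rather than agent interference. For case (a), during a window of length $r_i$ each other task $\tau_j$ releases at most $\eta_j(r_i)$ jobs, each of which issues at most $N_{j,q}$ requests of length at most $L_{j,q}$; summing over $\tau_j\neq\tau_i$ yields $I^A_{i,q}$. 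For case (b), the single job $J_i$ issues exactly $N_{i,q}$ requests to $\ell_q$, of which at most $N^{\lambda}_{i,q}$ are on $\lambda_i$; the remaining $N_{i,q}-N^{\lambda}_{i,q}$ requests each run for at most $L_{i,q}$, giving $\breve{I}^A_{i,q}$.

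Summing both contributions for each $\ell_q$ and then over all $\ell_q\in\Phi^G\cap\Phi^{\wp}(\tau_i)$ delivers the claimed bound. The main subtlety I expect is justifying the restriction to $\Phi^{\wp}(\tau_i)$ and the exclusion of requests originating from $\lambda_i$; both rely on the mutual exclusivity of the four delay categories established in Section~\ref{sec:blocking_interference}, and in particular on the fact that an agent running on a processor outside $\wp(\tau_i)$ never competes with $\lambda_i$ for a core. Once these structural points are settled, the arithmetic bounds follow directly from the task model and the standard job-count bound $\eta_j(r_i)$.
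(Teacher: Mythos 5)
Your proposal is correct and follows essentially the same decomposition as the paper's proof: split the agent workload per global resource on $\Phi^{\wp}(\tau_i)$ into the contribution of other tasks (bounded via $\eta_j(r_i)\cdot N_{j,q}\cdot L_{j,q}$) and the contribution of $\tau_i$'s vertices not on $\lambda_i$ (bounded by $(N_{i,q}-N^{\lambda}_{i,q})\cdot L_{i,q}$), then sum. The paper states these two bounds more tersely, relying on the definition of agent interference to exclude agents off $\wp(\tau_i)$ and requests issued from $\lambda_i$ itself, whereas you spell out those exclusions explicitly; the substance is the same.
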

\begin{proof}
While $\lambda_i$ is pending, the other tasks can request a resource $\ell_q$ for at most $I_{i,q}^A$, and the vertices of $\tau_i$ not on $\lambda_i$ can execute on $\ell_q$ for at most $\breve{I}_{i,q}^A$. Thus, the agent interference of $\lambda_i$ can be bounded by summing up $I_{i,q}^A+\breve{I}_{i,q}^A$ for all the global resources on $\Phi^{\wp}(\tau_i)$.
\end{proof}

Now that we bounded all the variables in Theorem~\ref{theorem:rta-path}, thus the WCRT of task $\tau_i$ can be bounded according to Eq.~(\ref{eq:wcrt}) by calculating the WCRTs of all paths of $\tau_i$. 

\section{Task and Resource Partitioning}
\label{sec:partitioning}

According to the schedulability analysis in Sect.~\ref{sec:rta}, the WCRT of a task can be determined only if the tasks and the global resources are partitioned. In this section, we present a partitioning algorithm to iteratively assign tasks and resources.

 For ease of description, we consider the processors assigned to each task as a cluster. Accordingly, we use $\wp^C_x$ to denote the $x$th cluster ($x\leq m$). The capacity of $\wp^C_x$, denoted by $\tilde{U}^{\text{cluster}}_x$, is equal to the number of the processors in $\wp^C_x$. The utilization of $\wp^C_x$, denoted by $U^{\text{cluster}}_x$, is the total of the utilizations of the task and the resources assigned to $\wp^C_x$, where the utilization of a resource $\ell_q$ is defined as $u^{\Phi}_q=\sum_{\tau_j\in\tau}\frac{N_{j,q}\cdot L_{j,q}}{T_j}$. The total utilization of the global resources assigned to a processor $\wp_k$ is denoted by $u^{\wp}_k$, i.e., $u^{\wp}_k=\sum_{\ell_q\in\Phi(\wp_k)}u^{\Phi}_q$. The utilization slack of a cluster $\wp^C_x$ is defined by $\tilde{U}^{\text{cluster}}_x-U^{\text{cluster}}_x$. A cluster is infeasible if $U^{\text{cluster}}_x>\tilde{U}^{\text{cluster}}_x$. 

Each task $\tau_i$ is initially assigned $\lceil \frac{C_i-\mathcal{L}_i^{\ast}}{D_i-\mathcal{L}_i^{\ast}}\rceil$ processors, and the global resources are partitioned according to the Worst-Fit Decreasing (WFD) heuristic, as shown in Algorithm 1. 
Intuitively, the global resource with the highest utilization is assigned to the processor with the lowest resource utilization in the cluster with maximum utilization slack, as shown in Algorithm 2.
The schedulability analysis is performed from the task with highest base priority. If there is a task unschedulable, then we assign an additional processor, if one exists, to that task. Since the capacity of the cluster is updated when an additional processor is assigned, we re-assign global resources and perform schedulability tests accordingly. The partitioning process runs at most $m-2n$ rounds for systems containing only heavy tasks.

\begin{algorithm}[t]
    \caption{Task and Resource Partitioning}
\begin{algorithmic}[1]
\small
\Require
the tasks $\tau$, the processors $\wp$, and the resources $\Phi$
\Ensure
the schedulability of the system

\State {\textbf{for} $\forall\tau_i\in\tau$ \textbf{do}}
\State {\ \ \ \ \textbf{if} there are $\lceil (C_i-\mathcal{L}_i^{\ast})/(D_i-\mathcal{L}_i^{\ast})\rceil$ processors unassigned \textbf{then}}
\State {\ \ \ \ \ \ \ \ assign $\lceil (C_i-\mathcal{L}_i^{\ast})/(D_i-\mathcal{L}_i^{\ast})\rceil$ processors to $\tau_i$}
\State {\ \ \ \ \textbf{else}}
\State {\ \ \ \ \ \ \ \ \textbf{return} {unschedulable}}

\State {\textbf{while} true \textbf{do}}

\State {\ \ \ \ \textbf{if} WFD\_Resource($\Phi^G$, $\wp$) is infeasible \textbf{then}}
\State {\ \ \ \ \ \ \ \ \textbf{return} {unschedulable}}

\State {\ \ \ \ \textbf{for} $\forall\tau_i\in\tau$ in decreasing order of priority \textbf{do}}
\State {\ \ \ \ \ \ \ \ \textbf{if} $\text{WCRT}(\tau_i)>D_i$ \textbf{then}}
\State {\ \ \ \ \ \ \ \ \ \ \ \ \textbf{if} there is a processor unassigned \textbf{then}}
\State {\ \ \ \ \ \ \ \ \ \ \ \ \ \ \ \ assign one more processor to $\tau_i$}
\State {\ \ \ \ \ \ \ \ \ \ \ \ \ \ \ \ rollback of the global resource assignment}
\State {\ \ \ \ \ \ \ \ \ \ \ \ \ \ \ \ \textbf{break}\ \ \ \ \ // i.e., go to line 9}
\State {\ \ \ \ \ \ \ \ \ \ \ \ \textbf{else}}
\State {\ \ \ \ \ \ \ \ \ \ \ \ \ \ \ \ \textbf{return} {unschedulable}}
\State {\ \ \ \ \textbf{return} {schedulable}}
    \label{alg:1}
\end{algorithmic}
\end{algorithm}

\begin{algorithm}[t]
    \caption{WFD\_Resources}
\begin{algorithmic}[1]
\small
\Require
the global resources $\Phi^G$, and the processors $\wp$
\Ensure
the feasibility of the global resource allocation

\State{sort $\Phi^G$ in non-increasing order of utilization}
\State {\textbf{for} $\forall\tau_i\in\tau$ \textbf{do}}
\State {\ \ \ \ $U^{\text{cluster}}_i=m_i$}

\State {\textbf{for} $\forall\ell_q\in\Phi^G$ \textbf{do}}
\State {\ \ \ \ select the cluster $\wp^C_x$ with the maximum value of $\tilde{U}^{\text{cluster}}_x-U^{\text{cluster}}_x$}
\State {\ \ \ \ \textbf{if} $U^{\text{cluster}}_x+u^{\Phi}_q>\tilde{U}^{\text{cluster}}_x$ \textbf{then}}
\State {\ \ \ \ \ \ \ \ \textbf{return} infeasible allocation}
\State {\ \ \ \ \textbf{else}}
\State {\ \ \ \ \ \ \ \ assign $\ell_q$ to processor $\wp_k$, s.t., $u^{\wp}_k=\min\{u^{\wp}_a|\wp_a\in\wp^C_x\}$}
\State {\ \ \ \ \ \ \ \ $U^{\text{cluster}}_x=U^{\text{cluster}}_x+u^{\Phi}_q$}
\State {\Return {feasible allocation}}
    \label{alg:3}
\end{algorithmic}
\end{algorithm}

\section{Discussions}
\label{sec:discussions}

Although we focus on heavy tasks in this paper, the DPCP-p approach can be extended to support light tasks. First, light tasks are treated as sequential tasks under federated scheduling, thus the original DPCP can be used to handle resource sharing between them. Further, since each heavy task is exclusively assigned a cluster of processors, the delays between heavy and light tasks are only due to global resources. According to the definitions in Sect.~\ref{sec:blocking_interference}, such delays can be captured by inter-task blocking and agent interference. According to Lemma~\ref{lem:g-blocking} and Lemma~\ref{lem:agent-interf}, bounding both inter-task blocking and agent interference does not distinguish between heavy and light tasks. Thus, the delays between heavy and light tasks can be analyzed using the analysis framework as presented in Sect.~\ref{sec:rta}. Notably, handling light tasks with shared resources optimally under federated scheduling remains as an open problem. 

Further, we assume that the maximum number of requests of each vertex $N_{i,x,q}$ is known. This is possible in some real-life applications when the maximum number of requests of each vertex can be pre-determined. Thus we can derive a more accurate blocking bound by using the exact number of requests on a path $\lambda_i$, i.e., $N_{i,q}^{\lambda}=\sum_{v_{i,x}\in\lambda_i}N_{i,x,q}$, rather than enumerating the value of $N_{i,q}^{\lambda}$ from $[0,N_{i,q}]$~\cite{DBLP:journals/tpds/DinhLAGL18}. The tradeoff is more calculations to enumerate all paths of the task in analysis. Notably, the presented analysis applies to the prior model \cite{DBLP:journals/tpds/DinhLAGL18,DBLP:conf/dac/JiangGLY19} by using the key-path-oriented analysis~\cite{DBLP:conf/dac/JiangGLY19}.
   
The blocking-time analysis can be further improved by modern analysis techniques, e.g., the Linear-Programming-based (LP-based) analysis in~\cite{DBLP:conf/rtas/Brandenburg13}. However, we have no evidence on how the LP-based analysis \cite{DBLP:conf/rtas/Brandenburg13} can be applied for this scenario yet. Thus, we first establish the fundamental analysis framework in this paper and remain fine-grained analysis as future work. 


\section{Empirical Comparisons}
\label{sec:experiments}

In this section, we conduct schedulability experiments to evaluate the DPCP-p approach using synthesized heavy tasks.

\subsection{Experimental Setup}
Multiprocessor platforms with $m\in\{8, 16, 32\}$ unispeed processors and $n_r$, ranging over $[2,4]$, $[4,8]$ or $[8,16]$, shared resources were considered. For each $m$, we generated the total utilizations of the tasksets from 1 to $m$ in steps of 0.05$m$. The task utilizations of a taskset were generated according to the RandFixedSum algorithm~\cite{EmbersonWATERS2010} ranging over $(1, 2U^{\text{avg}}]$, where $U^{\text{avg}}\in\{1.5,2\}$ is the average utilization of the tasks. The base priority of the tasks was assigned by the Rate Monotonic (RM) heuristic. The number of tasks $n$ was determined by the chosen $U^{\text{avg}}$ and the total utilization of the taskset.

For each task $\tau_i$, the DAG structure was generated by the Gr{\'{e}}gory Erd{\"{o}}s-R{\'{e}}nyi algorithm~\cite{DBLP:conf/simutools/CordeiroMPTVW10}, where the number of vertices $|V_i|$ was randamly chosen in $[10, 100]$, and the probability of an edge between any two vertices was set to 0.1. Task period $T_i$ was randomly chosen from log-uniform distributions ranging over $[10ms, 1000ms]$, and $C_i$ was computed by $U_i \cdot T_i$. $\tau_i$ uses each resource in a probability $p^r=\{0.5,0.75,1\}$. If $\tau_i$ used $\ell_q$, the maximum number of requests $N_{i,q}$ was randomly chose from $[1,25]$ or $[1,50]$, and the maximum critical section length $L_{i,q}$ was chosen in $[15\mu s, 50\mu s]$ or $[50\mu s, 100\mu s]$. The WCET of each vertex $C_{i,x}$ and the maximum number of requests in each vertex $N_{i,x,q}$ were randamly determined such that $C_i=\sum_{v_{i,x}\in V_i}C_{i,x}$ and $N_{i,q}=\sum_{v_{i,x}\in V_i}N_{i,x,q}$. To ensure plausibility, we enforced that $\mathcal{L}^{\ast}_i<D_i/2$ and $C_{i,x}\geq\sum_{\ell_q\in\Phi}N_{i,x,q}\cdot L_{i,q}$. The combination of the parameters consists of 216 experimental scenarios. 

\subsection{Baselines}
We compare DPCP-p with existing locking protocols, denoted by SPIN-SON~\cite{DBLP:journals/tpds/DinhLAGL18} and LPP~\cite{DBLP:conf/dac/JiangGLY19}, under federated scheduling (there is no study on locking protocols for the other state-of-the-art scheduling approaches in the literature, for which we will discuss in Sect.~\ref{sec:related-work}). For DPCP-p, we use DPCP-p-EP to denote the analysis as presented in Sect.~\ref{sec:rta} by enumerating all paths, and use DPCP-p-EN to denote the analysis by enumerating $N_{i,q}^{\lambda}$ from 0 to $N_{i,q}$ for $\forall\ell_q\in\Phi$ as in~\cite{DBLP:journals/tpds/DinhLAGL18,DBLP:conf/dac/JiangGLY19}. We also use FED-FP to denote a hypothesis baseline without considering shared resources under federated scheduling~\cite{DBLP:conf/ecrts/LiCALGS14}.
\subsection{Results}
Fig. 2 shows acceptance ratios of the tested approaches with increasing normalized utilization, where Fig. 2(b) and (d) include more resource contentions compared to Fig. 2(a) and (c). It is shown that DPCP-p-EP consistently schedules more tasksets than SPIN-SON and LPP. In particular, the advantage of the DPCP-p approach is more significant for heavy resource-contentions as shown in Fig. 2(b) and (d), while SPIN-SON appears to be competitive for light resource-contentions as shown in Fig. 2(a) and (c). 

For brevity, we use the notations of \emph{dominance} and \emph{outperformance}\footnote{For an experimental scenario, algorithm $A$ is said to outperform algorithm $B$ if algorithm $A$ scheduled more task sets than algorithm $B$, or dominate algorithm $B$ if its acceptance ratio is higher than that of algorithm $B$ at some tested points and never lower than that of algorithm $B$ at any tested point.} to summarize the main trends of the results in Table 2 and 3. It is shown that the DPCP-p approach improves upon SPIN-SON and LPP significantly. In particular, DPCP-p-EP outperforms in all scenarios, and it dominates in more than 99\% scenarios. Similarly, DPCP-p-EN dominates and outperforms more often than less.
\begin{figure}
\setlength{\belowcaptionskip}{-15pt}
\begin{minipage}{0.48\linewidth}
\centering
\includegraphics[width=4cm]{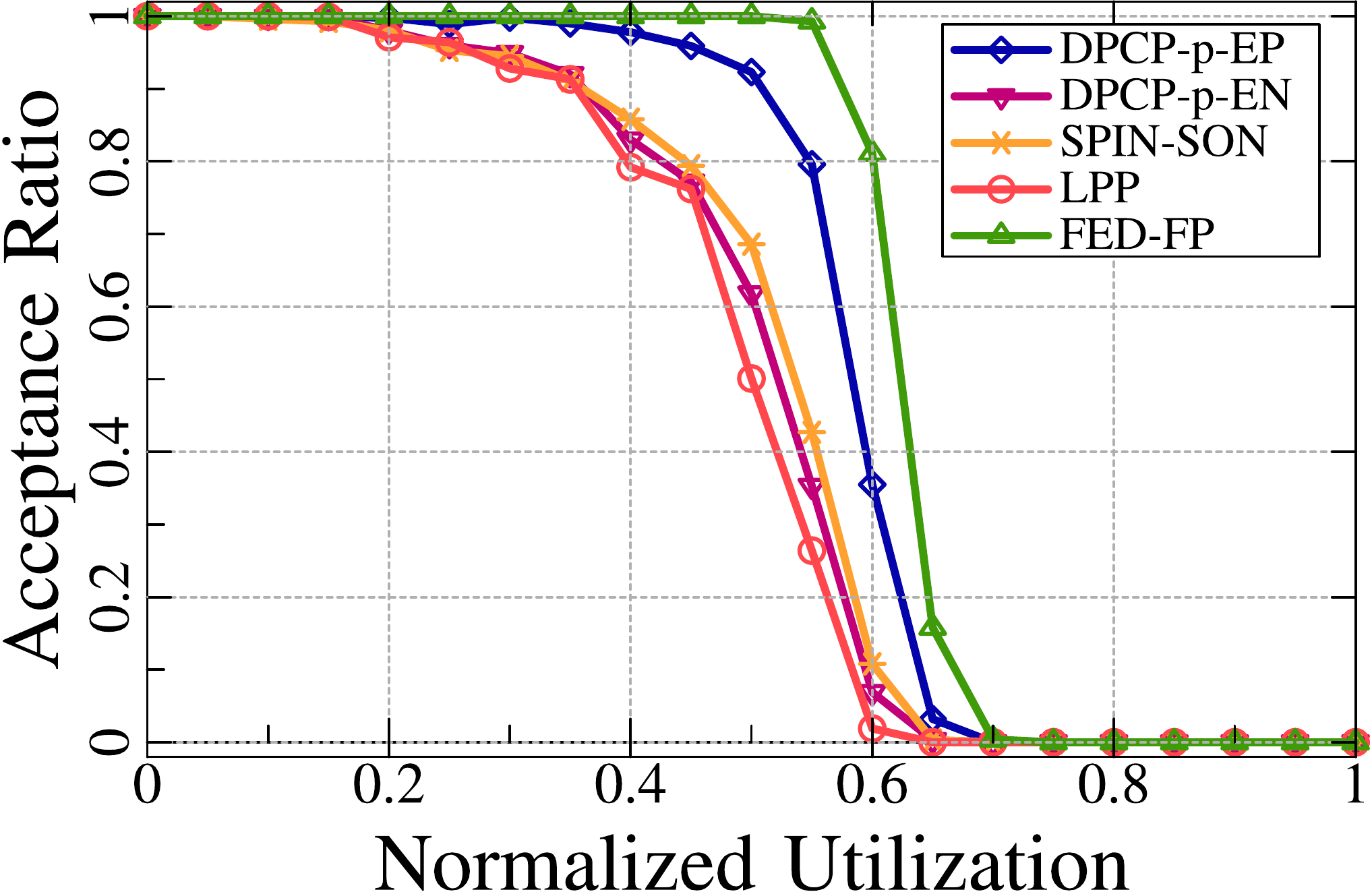}
\end{minipage}
\begin{minipage}{0.48\linewidth}
\centering
\includegraphics[width=4cm]{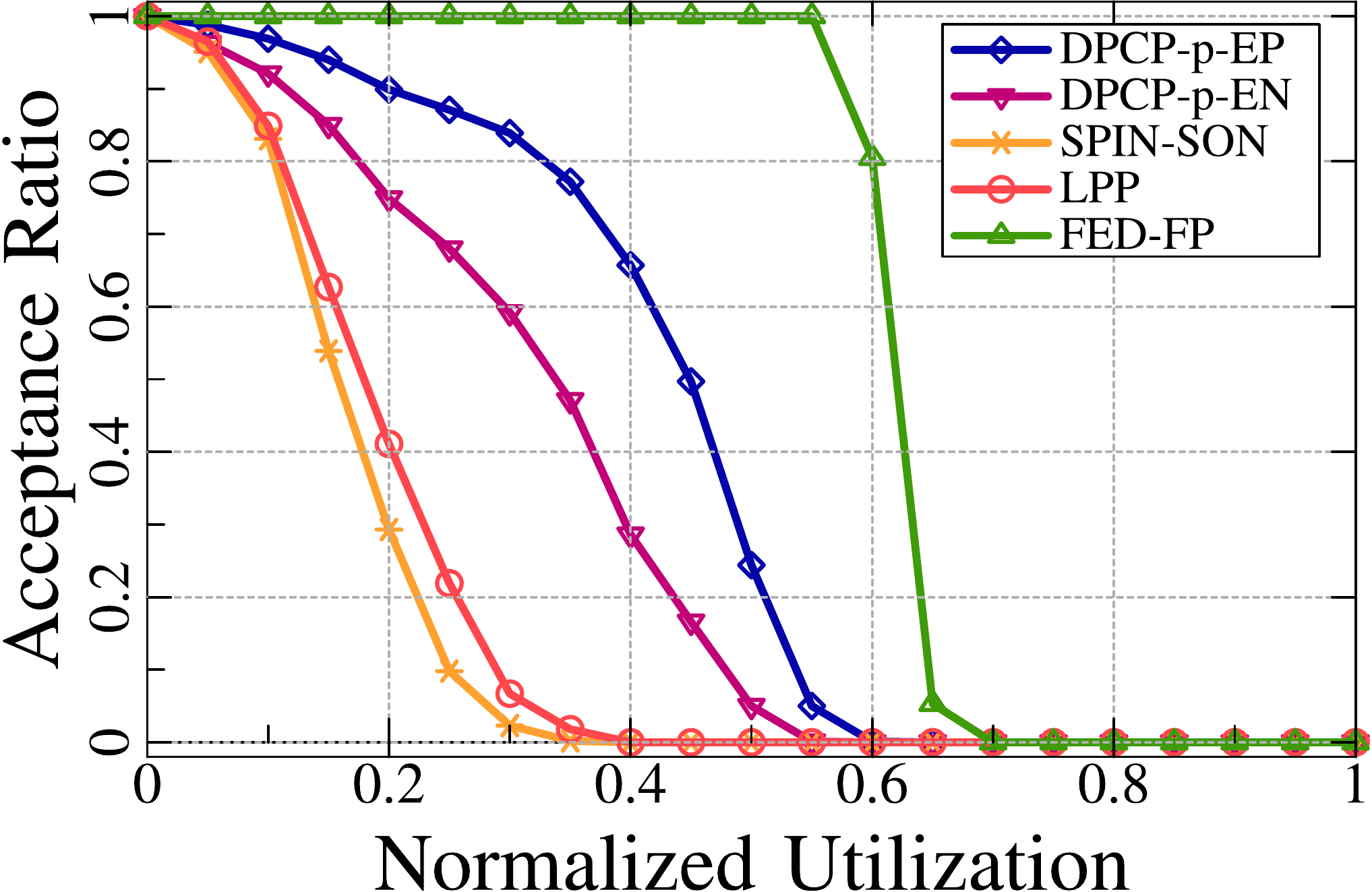}
\end{minipage}
\begin{minipage}{0.48\linewidth}
\vspace{1mm}
\centering
{\footnotesize (a) $U^{\text{avg}}=1.5$.}
\end{minipage}
\begin{minipage}{0.48\linewidth}
\vspace{1mm}
\centering
{\footnotesize (b) $U^{\text{avg}}=1.5$.}
\end{minipage}
\begin{minipage}{0.48\linewidth}
\centering
\includegraphics[width=4cm]{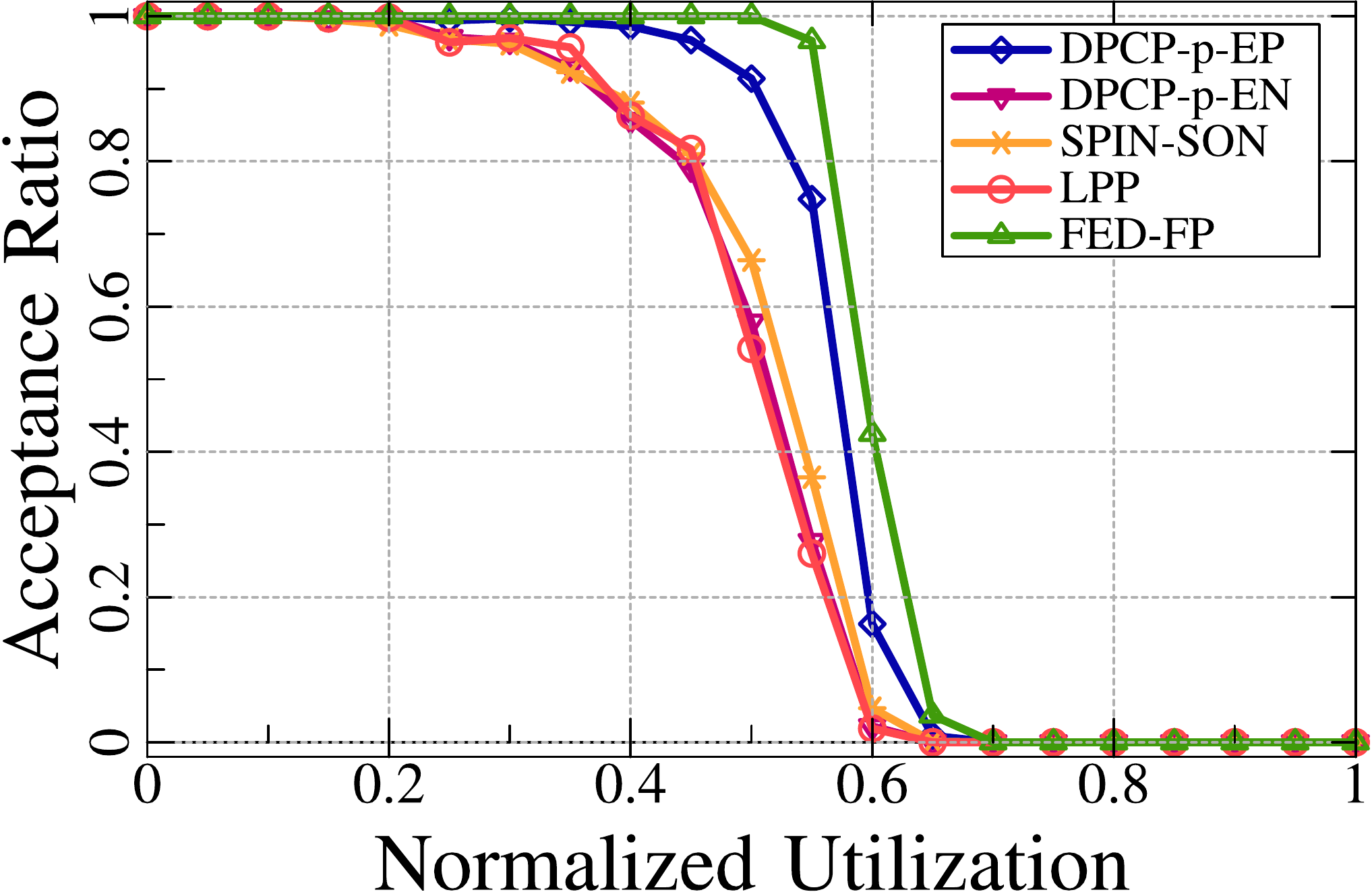}
\end{minipage}
\begin{minipage}{0.48\linewidth}
\centering
\includegraphics[width=4cm]{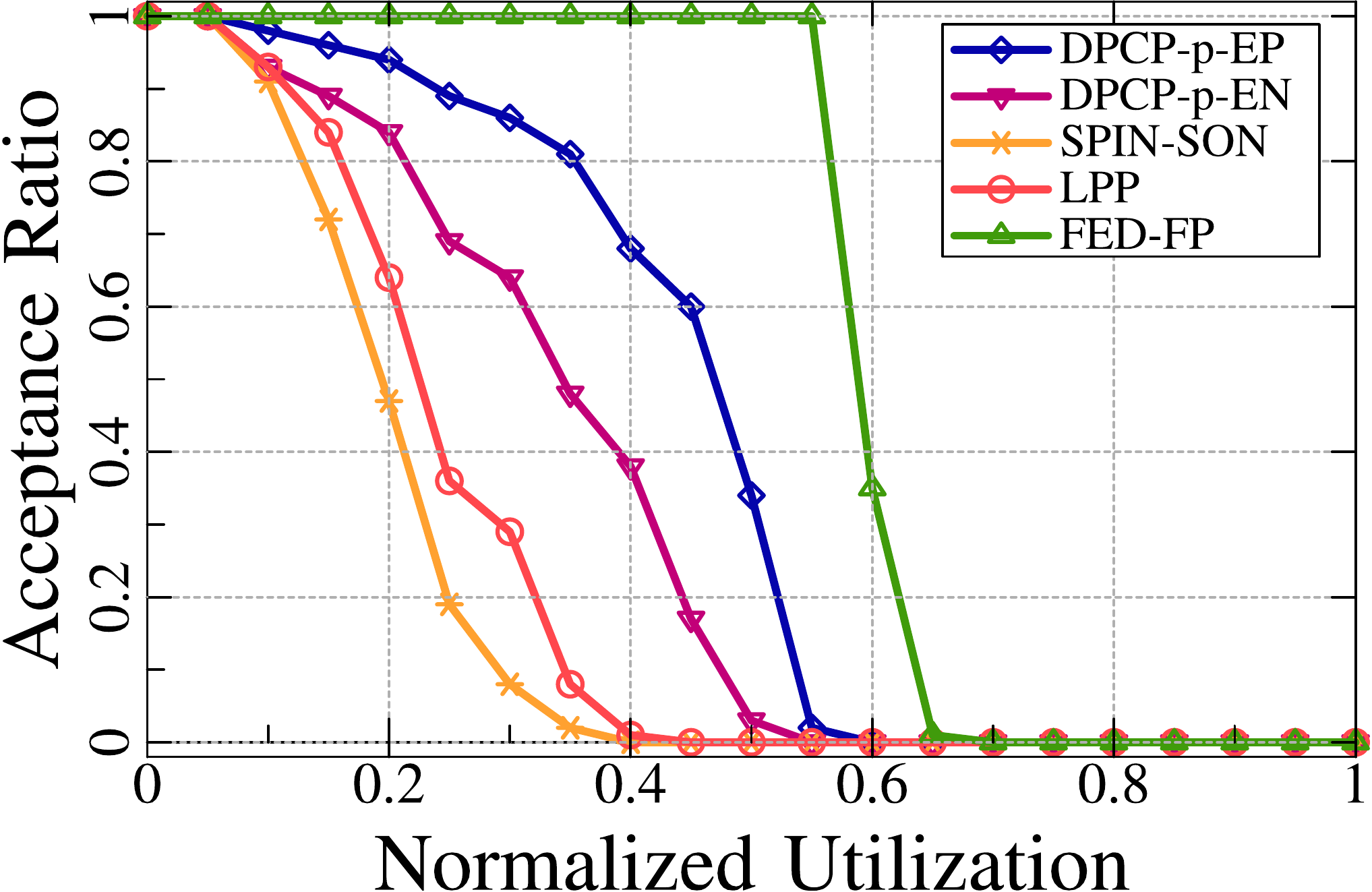}
\end{minipage}
\begin{minipage}{0.48\linewidth}
\vspace{1mm}
\centering
{\footnotesize (c) $U^{\text{avg}}=2$.}
\end{minipage}
\begin{minipage}{0.48\linewidth}
\vspace{1mm}
\centering
{\footnotesize (d) $U^{\text{avg}}=2$.}
\end{minipage}
\caption{Experiment results for $N_{i,q}\in [1,50]$, $L_{i,q}\in [50\mu s, 100\mu s]$, where $m=16$, $n_r\in [4,8]$, $p^r=0.5$ for (a) and (c), and $m=32$, $n_r\in [8,16]$, $p^r=1$ (b) and (d).}
\label{fig:results}
\end{figure}

\begin{center}
{\footnotesize{\bf Table 2.} Statistic for Dominance.}\\
\vspace{1mm}
\footnotesize{
\begin{tabular}{|c|c|c|c|c|}
\hline
& DPCP-p-EP & DPCP-p-EN & SPIN-SON & LPP \\
\hline
DPCP-p-EP & N/A & 216(100\%) & 215(99.5\%) & 216(100\%) \\
\hline
DPCP-p-EN & 0(0.0\%) & N/A & 104(48.1\%) & 87(40.3\%) \\
\hline
SPIN-SON & 0(0.0\%)& 10(4.6\%) & N/A & 39(18.1\%) \\
\hline
LPP & 0(0.0\%) & 32(14.8\%) & 38(17.6\%) & N/A \\
\hline
\end{tabular}}
\end{center}

\begin{center}
{\footnotesize{\bf Table 3.} Statistic for Outperformance.}\\
\vspace{1mm}
\footnotesize{
\begin{tabular}{|c|c|c|c|c|}
\hline
& DPCP-p-EP & DPCP-p-EN& SPIN-SON & LPP \\
\hline
DPCP-p-EP & N/A & 216(100\%) & 216(100\%) & 216(100\%) \\
\hline
DPCP-p-EN & 0(0.0\%)& N/A & 138(63.9\%) & 158(73.1\%) \\
\hline
SPIN-SON & 0(0.0\%) & 78(36.1\%) & N/A & 114(52.8\%) \\
\hline
LPP & 0(0.0\%)& 58(26.9\%) & 102(47.2\%) & N/A \\
\hline
\end{tabular}}
\end{center}

\section{Related Work}
\label{sec:related-work}
Real-time scheduling algorithms and analysis techniques for independent parallel tasks have been widely studied in the literature~\cite{DBLP:conf/rtns/FonsecaNN17,DBLP:conf/ecrts/LiCALGS14,
DBLP:conf/rtss/JiangLGW16,DBLP:journals/tc/MelaniBBMB17,DBLP:conf/ipps/BonifaciDM17,
DBLP:conf/rtss/CasiniBNB18a}, where shared resources are not modeled explicitly. 

The study of multiprocessor real-time locking protocols stems from the DPCP~\cite{DBLP:conf/rtss/RajkumarSL88} and the Multiprocessor Priority Ceiling Protocol (MPCP)~\cite{DBLP:conf/icdcs/Rajkumar90}. Empirical studies~\cite{DBLP:conf/rtas/Brandenburg13} showed that the DPCP has better schedulability performance than the MPCP. Based on the DPCP, Hsiu et al.~\cite{DBLP:conf/emsoft/HsiuLK11} presented a dedicated-core scheduling. More recently, Huang et al.~\cite{DBLP:conf/rtss/HuangYC16} proposed the ROP scheduling. However, the work in \cite{DBLP:conf/rtss/RajkumarSL88,DBLP:conf/icdcs/Rajkumar90,DBLP:conf/rtas/Brandenburg13,
DBLP:conf/emsoft/HsiuLK11,DBLP:conf/rtss/HuangYC16} all assumes sequential task models. Although the locking protocols that are originally used for sequential tasks, e.g., \cite{DBLP:conf/rtss/RajkumarSL88,DBLP:conf/icdcs/Rajkumar90}, might be used to handle concurrent requests of parallel tasks, no work on the corresponding analysis has been established in the literature. In this paper, we extend the DPCP to support parallel real-time tasks and present the schedulability analysis.


Recently, there is significant progress on the scheduling of parallel real-time tasks, e.g., partitioned~\cite{DBLP:conf/rtss/CasiniBNB18a}, semi-partitioned~\cite{DBLP:conf/ipps/BonifaciDM17}, global~\cite{DBLP:conf/rtns/FonsecaNN17,DBLP:journals/tc/MelaniBBMB17}, federated~\cite{DBLP:conf/ecrts/LiCALGS14}, and decomposition-based scheduling~\cite{DBLP:conf/rtss/JiangLGW16}. However, no study on locking protocols for the state-of-the-art scheduling approaches other than the federated scheduling have been reported in the literature, so far as we know. For federated scheduling, Dinh et al.~\cite{DBLP:journals/tpds/DinhLAGL18} studied the schedulability analysis for spinlocks. Jiang et al.~\cite{DBLP:conf/dac/JiangGLY19} developed a semaphore protocol called LPP under federated scheduling. Both \cite{DBLP:journals/tpds/DinhLAGL18} and \cite{DBLP:conf/dac/JiangGLY19} assume local execution of resource requests. The presented DPCP-p is based on a distributed synchronization framework, where requests to global resources are conducted on designated processors. In this way, the contention on each resource can be isolated to the designated processor such that blocking among tasks can be better managed.

\section{Conclusion}
\label{sec:conclusion}

This paper for the first time studies the distributed synchronization framework for parallel real-time tasks with shared resources. We extend the DPCP to DAG tasks for federated scheduling and develop analysis techniques and partitioning heuristic to bound the task response time. More precise blocking analysis based on the concrete DAG structure would be an interesting future work. 

\bibliographystyle{abbrv}
\bibliography{references}
\end{document}